\documentclass[lettersize,journal]{IEEEtran}
\usepackage{amsmath,amsfonts}
\usepackage{algorithmic}
\usepackage{algorithm}
\usepackage{array}
\usepackage[caption=false,font=normalsize,labelfont=sf,textfont=sf]{subfig}
\usepackage{textcomp}
\usepackage{stfloats}
\usepackage{url}
\usepackage{verbatim}
\usepackage{graphicx}
\usepackage{cite}
\usepackage{tikz}
\usepackage{hyperref}
\usetikzlibrary{shapes.geometric}

\usepackage{amssymb, amsthm}
\usepackage{multirow}
\usepackage{mathtools} 
\usepackage{xfrac} 

\newtheorem{lemma}{Lemma}
\newtheorem{definition}{Definition}

\DeclareMathOperator*{\PROBLEM}{VMCAP}
\DeclareMathOperator*{\maximize}{maximize}

\newcommand{\lingraph}{%
    \begin{tikzpicture}[scale=0.5, baseline=0, thick]
    \node[shape=rectangle,draw=black,fill,inner sep=1pt] (A) at (0pt,5pt){};
    \node[shape=rectangle,draw=black,fill,inner sep=1pt] (B) at (12pt,5pt){};
    \path (A) edge (B);
    \end{tikzpicture}%
}

\newcommand{\sqgraph}{%
    \begin{tikzpicture}[scale=0.5, baseline=0, thick]
    \node[shape=rectangle,draw=black,fill,inner sep=1pt] (A) at (0pt,0pt){};
    \node[shape=rectangle,draw=black,fill,inner sep=1pt] (B) at (12pt,0pt){};
    \node[shape=rectangle,draw=black,fill,inner sep=1pt] (C) at (12pt,12pt){};
    \node[shape=rectangle,draw=black,fill,inner sep=1pt] (D) at (0pt,12pt){};
    \path (A) edge (B);
    \path (B) edge (C);
    \path (C) edge (D);
    \path (D) edge (A);
    \end{tikzpicture}%
}

\newcommand{\sqstar}{%
    \begin{tikzpicture}[scale=0.5, baseline=0, thick]
    \node[shape=rectangle,draw=black,inner sep=1.2pt] (O) at (15pt,12pt){};
    \node[shape=rectangle,draw=black,fill,inner sep=1pt] (A) at (0pt,0pt){};
    \node[shape=rectangle,draw=black,fill,inner sep=1pt] (B) at (10pt,0pt){};
    \node[shape=rectangle,draw=black,fill,inner sep=1pt] (C) at (20pt,0pt){};
    \node[shape=rectangle,draw=black,fill,inner sep=1pt] (D) at (30pt,0pt){};
    \path (O) edge (A);
    \path (O) edge (B);
    \path (O) edge (C);
    \path (O) edge (D);
    \end{tikzpicture}%
}

\begin{document}

\title{Efficient calculation of available space for multi-NUMA virtual machines}

\author{
  Andrei~Gudkov\textsuperscript{1},
  Elizaveta~Ponomareva\textsuperscript{2},
  Alexis~Pospelov\textsuperscript{3} \\
  \par\vspace{5pt}\textit{Huawei Technologies Company Ltd, Lomonosov Research Institute, Moscow, 121099, Russian Federation}
}

\maketitle

\footnotetext[1]{Email: \href{mailto:gudokk@gmail.com}{gudokk@gmail.com}}
\footnotetext[2]{Email: \href{mailto:lizaveta@yandex.ru}{lizaveta@yandex.ru}}
\footnotetext[3]{Email: \href{mailto:alexis.pospelov@gmail.com}{alexis.pospelov@gmail.com}}

\begin{abstract}
Increasing demand for computational power has led cloud providers to employ
multi-NUMA servers and offer multi-NUMA virtual machines to their customers.
However, multi-NUMA VMs introduce additional complexity to scheduling algorithms.
Beyond merely selecting a host for a VM, the scheduler has to map virtual NUMA
topology onto the physical NUMA topology of the server to ensure optimal VM
performance and minimize interference with co-located VMs.
Under these constraints, maximizing the number of allocated multi-NUMA VMs on a host
becomes a combinatorial optimization problem.
In this paper, we derive closed-form expressions to compute the maximum number of VMs
for a given flavor that can be additionally allocated onto a physical server.
We consider nontrivial scenarios of mapping 2- and 4-NUMA symmetric VMs to 4- and
8-NUMA physical topologies.
Our results have broad applicability, ranging from real-time dashboards
(displaying available cluster capacity per VM flavor)
to optimization tools for large-scale cloud resource reorganization.

\vspace{0.5em}
\noindent \textbf{Keywords:} cloud computing, scheduling, NUMA-awareness, graph theory, combinatorial optimization, $b$-matching problem.
\end{abstract}

\section{Introduction}\label{section:introduction}

Multicore CPUs are commonly found in all modern computers ranging from inexpensive
hand-held devices to sophisticated servers used in high-performance and cloud computing.
A large number of cores enables the parallel execution of multiple threads with access to shared memory.
Since the number of cores, volume, and bandwidth of memory of a single CPU are limited,
high-end systems make use of Non-Uniform Memory Access (NUMA) architecture to further
improve these characteristics.
Such systems are built with multiple (2-8) NUMA nodes, each consisting of a CPU and memory
modules attached to it (Fig. \ref{fig:intro1}).
NUMA nodes are connected together using high-speed peer-to-peer communication links,
allowing a thread executing in one NUMA node to transparently access memory located
in another NUMA node.
From the perspective of an ordinary application, such a system looks like a single large multicore
system with flat memory address space.
Accessing remote memory is transparent, but it incurs additional latency proportional to the number
of hops in the shortest path to the target NUMA node and stresses the bandwidth of inter-NUMA links.
The majority of applications benefit if all their threads and memory
are bound to a single NUMA node\cite{cheng2013evaluation}\cite{schildermans2021virtualization}
to increase data locality.
In some cases, when an application cannot fit on a single NUMA node
or when it requires exceptionally high memory throughput, it may be beneficial
to distribute the memory and threads of the application among multiple -- but close -- NUMA nodes.

\begin{figure}[!t]
\centering
\includegraphics[width=0.76\columnwidth]{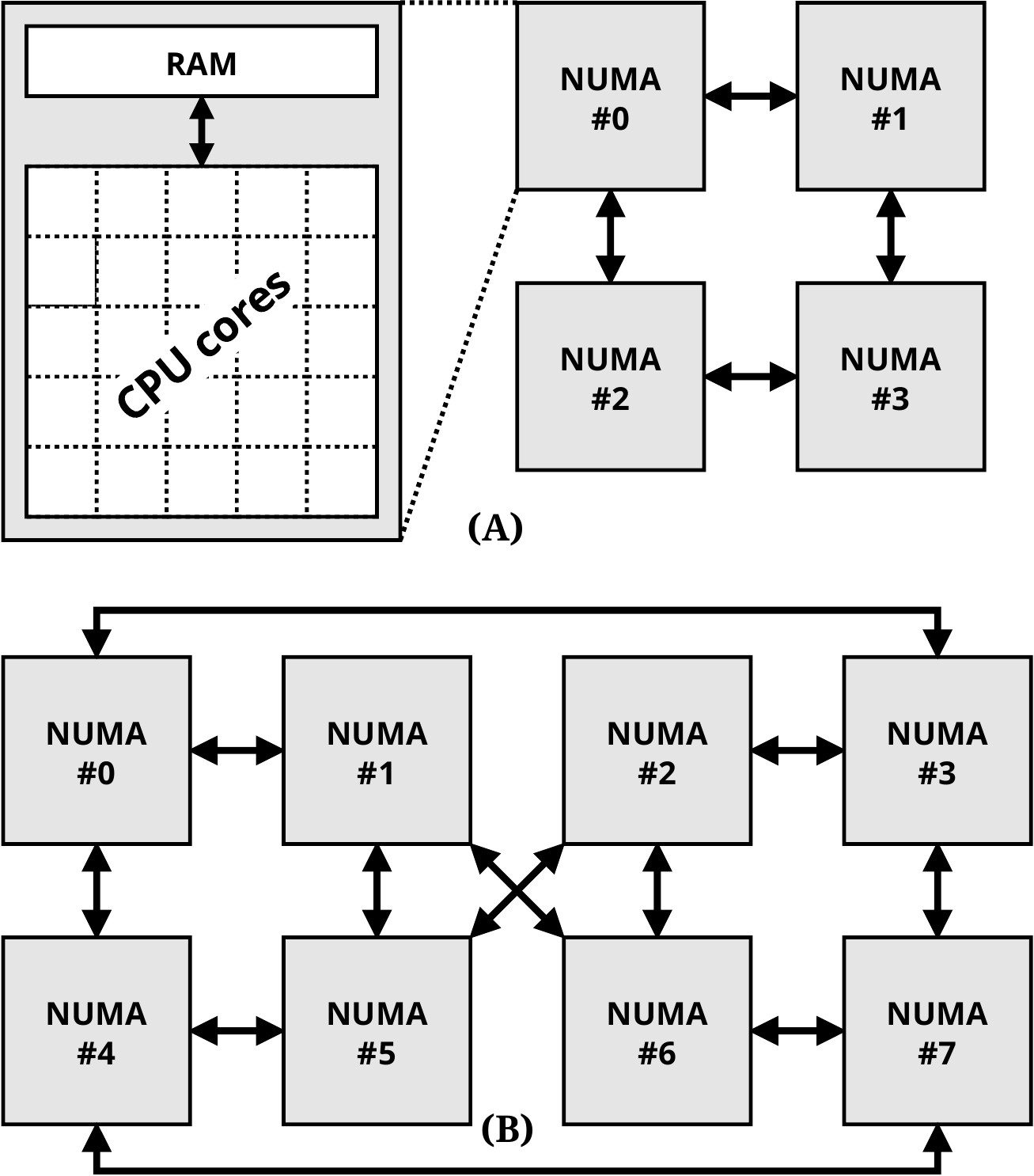}
\caption{Possible (A) 4-NUMA and (B) 8-NUMA topologies}
\label{fig:intro1}
\end{figure}

In the domain of cloud computing, physical resources of each server are divided into multiple
isolated virtual machines (VMs).
More specifically, customers order virtual machines with predefined characteristics,
such as the number of CPU cores (vCPU) and the amount of memory (RAM),
which are colloquially called \textit{flavors}.
When the cloud provider receives such a request, it searches for a server with sufficient free
resources and launches a VM there, reserving requested resources for the duration of the VM's lifespan.
These resources are released when the customer cancels the VM.
From the user's perspective, a virtual machine behaves like a fully functional computer that has
its own set of vCPU cores, memory, NUMA topology (vNUMA), and operating system.
At the same time, the operating system of the host views each VM simply as a process with one thread per vCPU core.

To cover a variety of user needs, cloud providers offer VMs of different sizes and vNUMA topologies.
The simplest vNUMA topology consists of a single vNUMA node, which must be
mapped to one of the pNUMA nodes that has sufficient free resources.
An example of a more complicated 2-vNUMA topology is a \lingraph\;graph, where both vNUMA nodes require
the same number of vCPU cores and the same amount of RAM.
It has to be mapped to a pair of distinct pNUMA nodes connected by a direct link and
each with sufficient free resources.
Violation of this rule, such as asymmetric resource allocation between two pNUMA nodes,
mapping both vNUMA nodes to the same pNUMA node, or placing them to indirectly connected pNUMA nodes
would create undesirable performance variation between VMs of the same flavor and is therefore avoided.
An even more complicated vNUMA topology is that of a square \sqgraph,
again with identical resource demand per vNUMA node.

Cloud providers are often interested in knowing how many additional VMs of a specific flavor 
can be started in an already partially occupied server (Fig. \ref{fig:intro2}).
This problem has numerous applications:
\begin{itemize}
  \item Summed over all servers in the cluster, it directly serves as the capacity
        metric that indicates how many VMs of a given flavor can additionally be sold \cite{verma2014evaluating}.
  \item It can be used as a subroutine of the scheduler that selects the server and pNUMA nodes
        where to start a new VM.
  \item In a more challenging scenario, it may be required to optimize the whole cluster state
        with respect to the capacity metric, i.e., to reassign already existing VMs (using live migration)
        inside and between servers in order to maximize the number of VMs of a given flavor
        that can additionally be sold \cite{zhu2024phecon}\cite{sun2024numa}.
        The challenge here is that, since this type of optimization is NP-hard and often involves
        the application of local search methods, the capacity metric has to be evaluated millions of times
        for different inputs during a single optimizer run.
\end{itemize}

For all combinations of pNUMA/vNUMA topologies except the simplest cases
there is no straightforward algorithm for solving this problem.
As we show in the next section, it can be approached with graph-theoretic algorithms
or formulated as an integer linear programming (ILP) model
and solved with one of many available ILP solvers.
However, both approaches are undesirable in practice because of complexity
or reliance on third-party tools.
Another alternative, which may appear attractive in latency-sensitive online components,
is to precompute values for all possible inputs into a lookup table (LUT).
But it is viable only for topologies with a tiny number of pNUMA nodes.
For example, a LUT for a system with four pNUMA nodes of 100 cores each would already require
on the order of $100^4$ entries.
In this article, guided by the fact that the number of real-world pNUMA/vNUMA topology
combinations is limited, we present and formally prove simple formulas for all of them.
These formulas accept the remaining capacities of each pNUMA as inputs
and require nothing beyond simple algebraic operators.
Searching for specific per-VM placements is outside the scope of this article,
although most of our formulas suggest straightforward algorithms to accomplish this.

\begin{figure}[!t]
\centering
\includegraphics[width=\columnwidth]{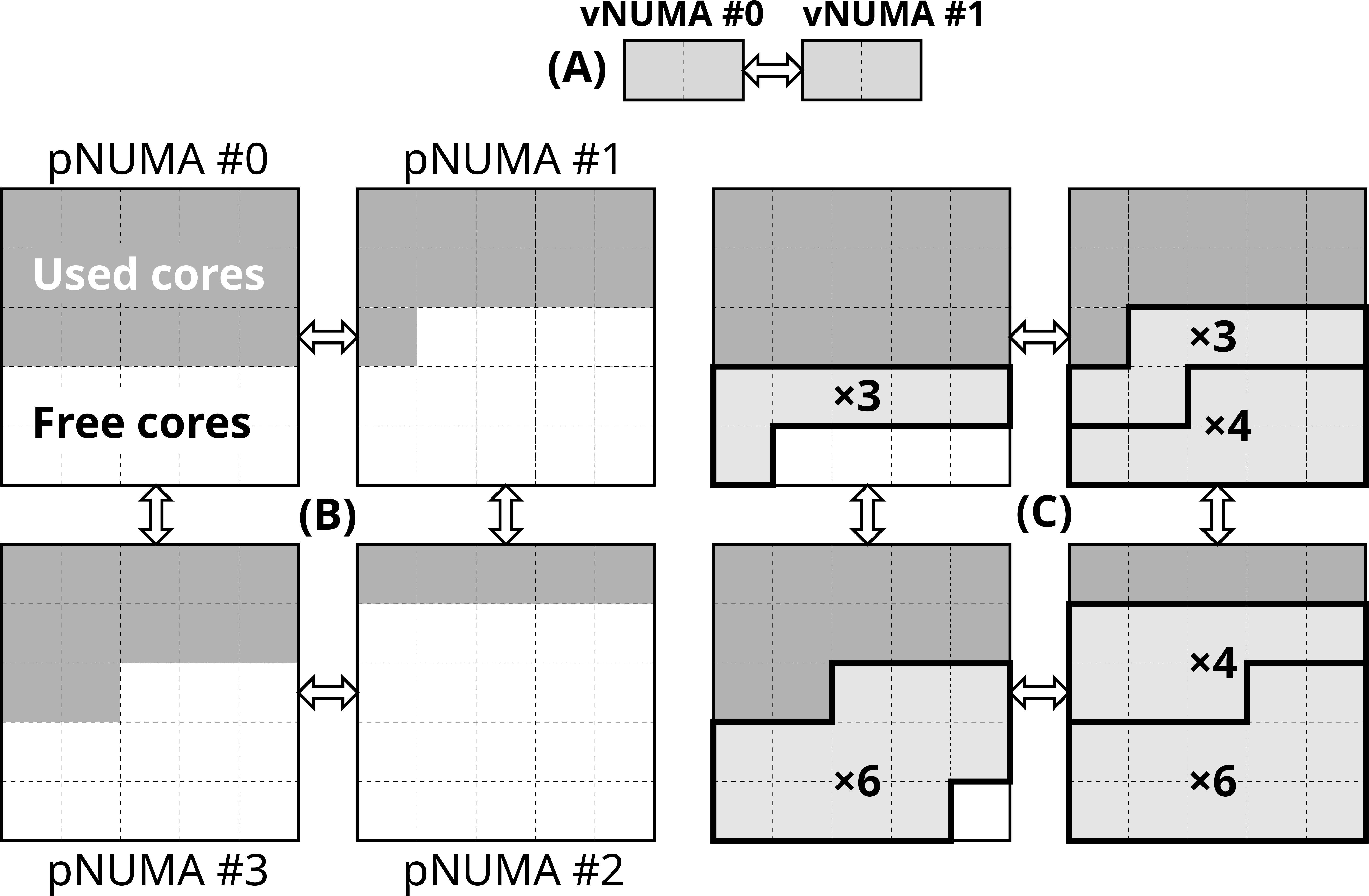}
\caption{
  Optimal solution for placing 2-vNUMA VMs (A) into partially filled 4-pNUMA server (B) is demonstrated in (C).
  3 VMs are placed in a pair of top pNUMA nodes, 4 VMs -- in a pair of right pNUMA nodes, and 6 VMs --
  in a pair pair of bottom pNUMA nodes.
  In total there is space for $3+4+6=13$ VMs.
  Some of the cores remain unused.
  Only CPU resource is considered.
  Geometric position of CPU cores within pNUMA nodes is irrelevant.
}
\label{fig:intro2}
\end{figure}

\section{Formal definition}\label{section:formal}

pNUMA and vNUMA topologies can be naturally represented with undirected graphs,
which we will denote as $G$ and $g$ respectively.
Graph vertices are the NUMA nodes, while edges represent links between them.
$g$ is assumed to be connected and has at least two vertices.
Following conventional notation, let $V(\cdot)$ denote the set of vertices of a graph,
$E(\cdot)$ -- the set of all edges, and $A(i)$ -- the set of edges adjacent to vertex $i$.
Furthermore, let the pNUMA graph $G$ be capacitated at vertices with capacity vector $b$:
$b_i \in \mathbb{Z}_{\geqslant 0} \; \forall i\!\in\!\mathrm{V}(G)$.
The capacity of vertex $i$ is the number of vNUMA nodes that the pNUMA node corresponding to $i$ can accommodate.
It is precomputed as the minimum across all resources for which the cloud provider guarantees dedicated allocation.
Most often these resources are the number of CPU cores and the amount of RAM, but may occasionally include
less tangible resources, such as L3 cache\cite{shahrad2021provisioning}:
\begin{align}
b_i = \min\limits_{\mathmakebox[5.4em][r]{r \in \left\{\mathrm{CPU}, \mathrm{RAM}, ...\right\}}}
{\left\lfloor\!\frac{\text{free capacity of $i$ in resource $r$}}{\text{demand of single vNUMA for $r$}}\!\right\rfloor}
\end{align}

Now we are ready to formulate the problem in a graph-theoretic sense:
\begin{definition}
Given pNUMA graph $G$, vNUMA graph $g$, and capacity vector $b$ imposed on the vertices of $G$,
$\PROBLEM(G, g, b)$ is the maximum number of times graph $g$ can be simultaneously mapped
to subgraphs of $G$ in such a way that each vertex $i$ of $G$ appears in at most $b_i$ mappings.
\end{definition}

In the most general case, when $g$ and $G$ are not known in advance, the problem
is at least as hard as the subgraph isomorphism problem, which is NP-complete.
However, in the case of $g=\lingraph$ the problem degenerates into the maximum
cardinality $b$-matching problem\cite{Korte_Vygen_2018}\cite{ferdous2021algorithms}:
the goal is to match pairs of adjacent vertices as many times as possible with
no restriction on the number of times each edge is used,
but with the restriction that each vertex $i$ is matched at most $b_i$ times.
This problem admits a polynomial-time solution\cite{gabow1983efficient},
although the algorithm is complicated.

Within the scope of cloud computing the problem needs to be solved only for specific,
small graphs (table \ref{table:combinations}).
This observation allows us to enumerate the set $P(G,g)$ of all possible subgraphs
of $G$ isomorphic to $g$, which results in the following straightforward ILP formulation:
\begin{align}
&\PROBLEM(G, g, b) = \maximize\limits_{\left\{x_p\right\}} \sum\limits_{p}{x_p} & \\
&\hspace{0.5em}\mathrm{s.t.}\hspace{1em} \sum\limits_{\mathmakebox[2em][l]{p:\,i \in V(p)}}{x_p} \leqslant b_i & \forall i \in V(G) \\
&\hspace{0.5em}\phantom{\mathrm{s.t.}}\hspace{1em} x_p \in \mathbb{Z}_{\geqslant 0} & \forall p \in P(G,g)
\end{align}

In this formulation $\left\{x_p\right\}$ are the variables that
decide how many times $g$ is mapped to each of the possible subgraphs.
One way to interpret this ILP formulation is as a multiple-choice vector knapsack problem.
The number of knapsack dimensions is $\left| V(G) \right|$; there is an infinite supply
of identical items with $\left|P\right|$ binary vectors representing the alternatives
of how a single item can be placed; and the goal is to place as many items into the knapsack
as possible without exceeding per-dimension capacities $b_i$.

Table \ref{table:combinations} lists non-trivial combinations of $G$ and $g$ in common use.
The set of possible $G$ graphs varies depending on CPU manufacturer.
4-pNUMA topologies are typically $K_4$ (Intel 4S-3UPI CPUs, Huawei Kunpeng 920\cite{xia2021kunpeng})
due to the minimum possible inter-node distance, but may alternatively be $C_4$ (Intel 4S-2UPI).
8-pNUMA topologies are based on "cubic" graphs: either enhanced cube $Q_{3,3}$\cite{tzeng1991enhanced}
(Intel 8S-4UPI Sapphire Rapids CPUs) or crossed cube $CQ_3$\cite{efe1992crossed}\cite{efe1994topological}
(Intel 8S-3UPI Skylake CPUs \cite{kumar2017skylake}).
It is notable that classical cube $Q_3$ is not commonly used, which can be explained by its poor
worst-case distance of 3 hops.
Both listed 8-NUMA topologies demonstrate worst-case distance of only 2 hops, which comes
at the cost of either increasing number of links per pNUMA node from 3 to 4 ($Q_{3,3}$)
or by allowing $G$ to be non-bipartite ($CQ_3$).

Connectivity in AMD systems, in contrast, is a hierarchy of up to three layers -- sockets, quadrants,
core/cache dies -- with cores located in the bottom-most layer.
AMD provides a way to configure what is considered a pNUMA node\cite{amd2021socket}.
For example, a cloud administrator may configure each quadrant to be a separate pNUMA node while prohibiting
scheduling vNUMA nodes of a single VM onto different sockets because of high inter-socket latency.
Such a setup can be modeled as two independent stars \sqstar. 
Since star graph is equivalent in terms of connectivity to a complete graph,
maximizing the number of scheduled $k$-vNUMA VMs to $n$-leaf star topology is equivalent
to solving two independent $\PROBLEM(K_n, K_k, b)$ problems, one per socket.
In section \ref{section:kn_kk} we demonstrate an approach to derive formulas for computing
$\PROBLEM(K_n, K_k, b)$ for arbitrary $n$ and $k$.
Already listed scenarios of scheduling $K_2$ and $K_3$ into $K_4$ are special cases of this general problem.

\begin{table*}[t]
\centering
\begin{tabular}{c}
\includegraphics[width=\textwidth]{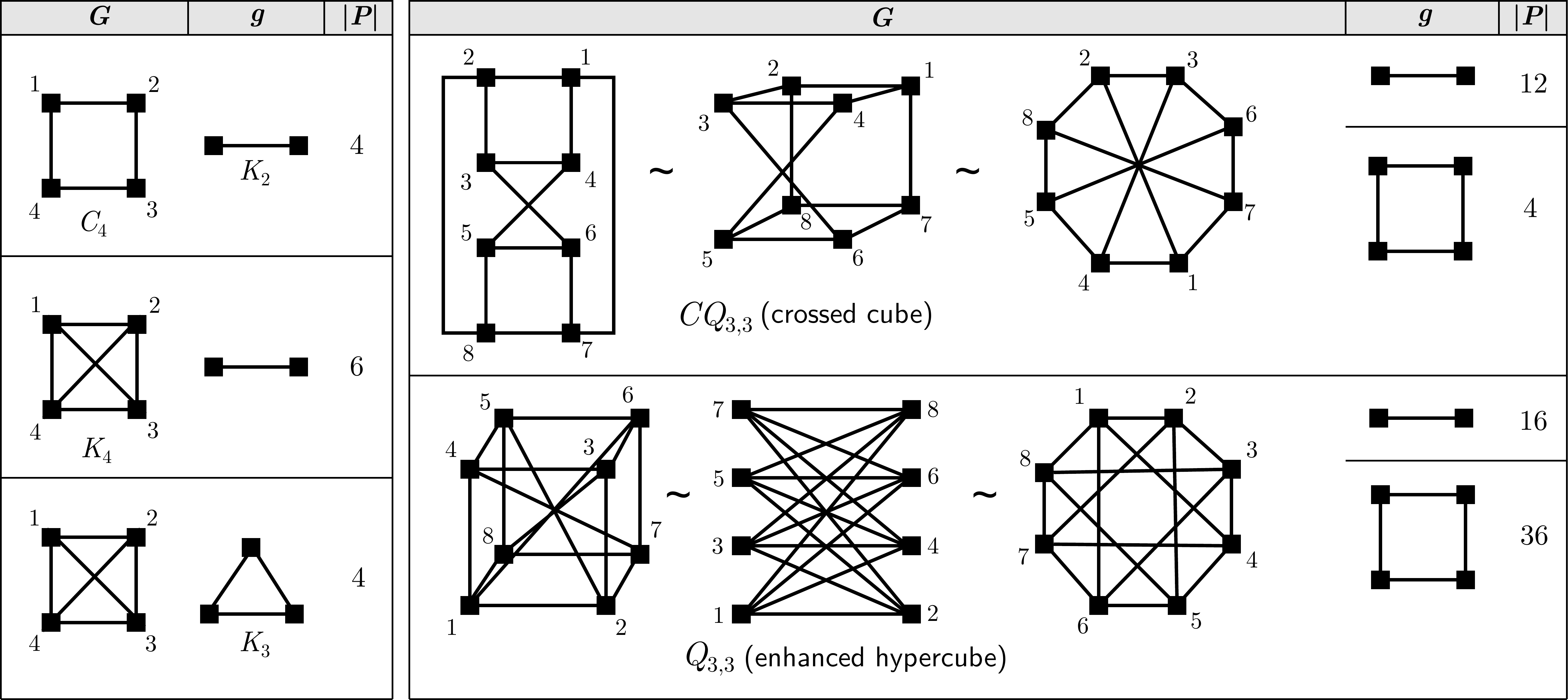}
\end{tabular}
\caption{The table lists seven considered combinations of $G$ (pNUMA topology) and $g$ (vNUMA topology).\\
         $G_1 \sim G_2$ indicates a graph isomorphism.
         $|P|$ column states the number of possible ways $g$ can be embedded into $G$.}
\label{table:combinations}
\end{table*}

The following sections consider each of the listed combinations of $G$ and $g$ separately.

\section{\texorpdfstring{$\PROBLEM(C_4, K_2, b)$}{VMCAP(C4, K2, b)}}\label{section:c4_k2}

Any $\PROBLEM(G,g,b)$ problem with $g\!=\!K_2$ can be converted into the classical maximum cardinality
matching problem (MCM) using the following reduction\cite{tutte1954short}.
Create an uncapacitated graph $G'$ such that each vertex $i$ of $G$ generates $b_i$
distinct vertices in $G'$, and two vertices in $G'$ are connected by an edge if and only if their
generating vertices in $G$ are connected:
\begin{align*}
& V(G') = \left\{i^{(k)} \;\;\forall i \in V(G),\; k=1..b_i\right\}\!, \\
& E(G') = \left\{(i^{(k_1)}, j^{(k_2)})\!: (i,j) \in E(G)\right\}\!.
\end{align*}

Thus $G'$ is a graph with $\sum_{i \in V(G)}{b_i}$ vertices
and $\sum_{(i,j) \in E(G)}{b_i\!\cdot\!b_j}$ edges.
It is clear that solving $\PROBLEM$ problem is equivalent to searching for the maximum
cardinality vertex matching in $G'$.
Since the size of $G'$ depends on the $b$ values, direct application of MCM algorithms
is inefficient because it leads to pseudo-polynomial running time in terms
of the size of the original graph $G$.
The inverse "compressing" transformation is equally valid:
\begin{lemma}\label{lemma:vertex-merging}
If there exists a subset of vertices $V' \subseteq V(G)$ such that
$A(i)=A(j)$ and $(i,j) \notin E(G)$ for all $i,j \in  V'$, then
merging all vertices in $V'$ into a single vertex $k$ with capacity $b_k=\sum\limits_{i \in V'}{b_i}$
and removing duplicate edges yields a graph equivalent to $G$ with respect to $\PROBLEM(G,K_2,b)$ metric.
\end{lemma}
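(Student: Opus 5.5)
The plan is to exploit the blow-up reduction to $G'$ introduced just above, and to observe that the "expanded" graphs of $G$ and of the merged graph $H$ coincide up to relabeling. Write $H$ for the graph obtained by merging $V'$ into $k$, with capacities $b^H_k=\sum_{i\in V'}b_i$ and $b^H_j=b_j$ for $j\notin V'$. By the reduction, $\PROBLEM(G,K_2,b)$ equals the maximum cardinality matching of $G'$, and $\PROBLEM(H,K_2,b^H)$ equals that of $H'$. So it suffices to exhibit a graph isomorphism $G'\cong H'$.

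First I will pin down the structure of $H$. Since $(i,j)\notin E(G)$ for $i,j\in V'$, merging creates no self-loop at $k$. Since $A(i)=A(j)$ (read as equal neighbourhoods $N(i)=N(j)=:N$, which is the intended meaning since edge sets themselves differ by endpoint), every $i\in V'$ is adjacent to exactly the vertices of $N$. In particular $N\cap V'=\emptyset$, which again uses non-adjacency within $V'$. After removing duplicate edges, $k$ is adjacent exactly to $N$, and edges among vertices outside $V'$ are untouched.

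Next, I define the bijection $\varphi:V(G')\to V(H')$. Enumerate $V'=\{i_1,\dots,i_m\}$ and map the copies $i_t^{(s)}$, $s=1..b_{i_t}$, consecutively onto the copies $k^{(1)},\dots,k^{(b^H_k)}$. This is possible precisely because $b^H_k=\sum_t b_{i_t}$. Copies of vertices outside $V'$ map to themselves. I then check that edges are preserved in both directions, case by case. For two copies of vertices outside $V'$, adjacency is literally the same in $G$ and $H$. For a copy of $i\in V'$ and a copy of $j\notin V'$, the pair is adjacent in $G'$ iff $j\in N(i)=N$ iff $j$ is adjacent to $k$ in $H$. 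For two copies of vertices in $V'$ (possibly the same vertex), there is no edge in $G'$ because $V'$ is independent and $G$ has no loops, and there is no edge in $H'$ because $k$ has no loop. Hence $\varphi$ is an isomorphism, so the maximum matchings agree and the two $\PROBLEM$ values are equal.

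The only delicate step is the interpretation of the hypothesis $A(i)=A(j)$ and the resulting claim that $k$'s neighbourhood in $H$ is exactly $N$. If one reads $A$ literally as edge sets, the hypothesis forces $|V'|\le1$ and the lemma becomes trivial. I will therefore state explicitly that neighbourhood equality is meant. Optionally, I can also give a direct argument avoiding $G'$: a $G$-solution $\{x_{(u,v)}\}$ maps to an $H$-solution by redirecting edges at $V'$ to $k$. Conversely, an $H$-solution splits the $x_{(k,j)}$ units greedily among the vertices of $V'$ up to their capacities $b_i$, which is feasible since the total demand at $k$ is at most $\sum b_i$ and every $i\in V'$ is adjacent to every such $j$.
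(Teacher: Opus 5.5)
Your proof is correct and follows the paper's argument: the paper simply asserts that blowing up $G$ and the merged graph yields ``exactly the same'' uncapacitated $G'$, and your bijection $\varphi$, with its case check on edges, is what makes that up-to-relabeling claim rigorous. One small correction to your aside: under a literal edge-set reading of $A(\cdot)$, the hypothesis forces either $|V'|\leqslant 1$ \emph{or} all vertices of $V'$ to be isolated (not only $|V'|\leqslant 1$), though the lemma is trivial in both cases.
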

\begin{proof}
Applying the previously described reduction procedure independently to the original
graph $G$ and to the graph with merged vertices yields exactly the same uncapacitated $G'$
and thus the same MCM instance (Fig. \ref{fig:k2c4}).
\end{proof}

\begin{figure}[ht]
\centering
\includegraphics[width=\columnwidth]{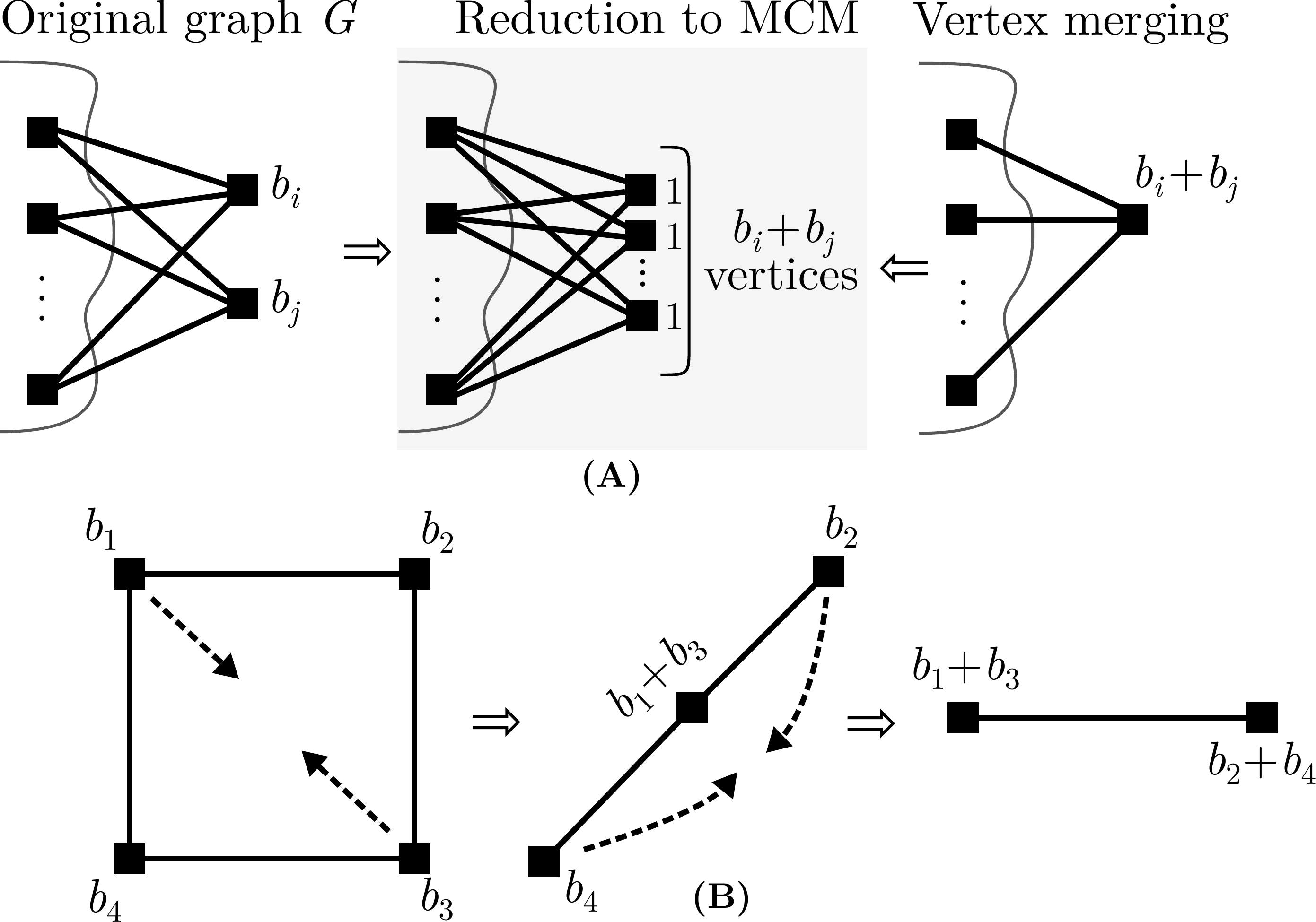}
\caption{(A) Illustration for Lemma \ref{lemma:vertex-merging};
         (B) the sequence of reductions used to derive $\PROBLEM(C_4, K_2, b)$.}
\label{fig:k2c4}
\end{figure}

To solve $\PROBLEM$ problem for $G=C_4$ and $g=K_2$ we apply Lemma \ref{lemma:vertex-merging} twice.
First, we merge vertices $1$ and $3$, and then we merge vertices $2$ and $4$,
yielding the following formula:
\begin{align} \label{formula:c4_k2}
\PROBLEM(C_4, K_2, b) = \min\left(b_1\!+\!b_3,\,b_2\!+\!b_4\right)\!.
\end{align}

\section{\texorpdfstring{$\PROBLEM(K_4, K_2, b)$ and $\PROBLEM(K_4, K_3, b)$}{VMCAP(K4, K2, b) and VMCAP(K4, K3, b)}}\label{section:kn_kk}
We begin by proving two formulas for the general problem of $\PROBLEM(K_n, K_k, b)$,
i.e. where both $G$ and $g$ are complete graphs of arbitrary sizes.
These formulas are based on the idea of comparing the trivial upper bound
$\left\lfloor\sfrac{1}{k}\sum{b_i}\right\rfloor$ with the largest vertex capacity $\max b_i$.
If this upper bound is greater than or equal to $\max b_i$, then it is attainable.
Otherwise, the problem can be reduced to $\PROBLEM(K_{n-1}, K_{k-1}, b)$
by dropping the vertex with the maximum capacity.

\begin{lemma}\label{lemma:knkkrec}
Given that $1 \leqslant k \leqslant n$:
\begin{align}\label{formula:knkkrec}
&\PROBLEM(K_n,K_k,b) = \notag\\
&\hspace{0.12in}\begin{cases}
  \left\lfloor\sfrac{1}{k}\Sigma b_i\right\rfloor\!,\hspace{0.6in}\text{if } \left\lfloor\sfrac{1}{k}\Sigma b_i\right\rfloor \geqslant \max b_i,\\
  \PROBLEM(K_{n-1},K_{k-1},b\setminus\max b_i),\;\text{otherwise.}
\end{cases}
\end{align}
\end{lemma}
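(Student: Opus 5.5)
We split into the two cases of formula \eqref{formula:knkkrec}. Write $S=\Sigma b_i$ and $M=\max b_i$, attained at some vertex $m$. Since $K_n$ is complete, every $k$-subset of vertices is a valid image of $K_k$. A feasible solution is therefore a multiset of $k$-subsets in which each vertex $i$ appears at most $b_i$ times. Two trivial upper bounds hold for any such solution. First, $\PROBLEM(K_n,K_k,b)\leqslant\lfloor S/k\rfloor$, since each mapping consumes $k$ units of total capacity. Second, the number of mappings avoiding vertex $m$ is at most $\lfloor (S-M)/k\rfloor$ by the same counting. We also use repeatedly that a mapping uses distinct vertices, so it uses each vertex at most once. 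The degenerate case $k=1$ (where the answer is $S$ and the first branch always applies) is handled separately.

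\textbf{First case, $T:=\lfloor S/k\rfloor\geqslant M$.} We construct $T$ mappings by the classical ``wrap-around'' (McNaughton-style) argument. First reduce capacities to $b'_i\leqslant b_i$ with $\Sigma b'_i=kT$, for instance by decreasing arbitrary entries; we still have $b'_i\leqslant T$. Then lay the vertices out consecutively, vertex $i$ occupying $b'_i$ consecutive cells, along a sequence of $kT$ cells arranged as $k$ rows of length $T$ read row by row. Column $t$ of this array defines mapping $t$. Because $b'_i\leqslant T$, the cells of vertex $i$ fall into distinct columns. Hence each column contains $k$ distinct vertices, which is a valid $K_k$, and vertex $i$ is used $b'_i\leqslant b_i$ times. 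This attains the upper bound.

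\textbf{Second case, $T<M$.} We aim to show that some optimal solution uses vertex $m$ in every mapping. Given that, removing $m$ from each mapping yields a solution of $\PROBLEM(K_{n-1},K_{k-1},b\setminus M)$. Conversely, any solution of the reduced problem has value at most $\lfloor (S-M)/(k-1)\rfloor$. The key inequality is
\[
\left\lfloor \frac{S-M}{k-1}\right\rfloor < M,
\]
which follows from $S<k(M+1)$: it gives $S-M<(k-1)(M+1)-1+1$, hence $(S-M)/(k-1)<M+1$ up to the floor. We will check this carefully with integers, using $S\leqslant kM+k-1$. Given the inequality, every reduced solution has size at most $M$, so we can add $m$ to each of its mappings and obtain a feasible solution of the original problem. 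This proves $\geqslant$.

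For $\leqslant$ in the second case, take an optimal solution and transform it by exchanges. Suppose some mapping $Q$ does not contain $m$. If $m$ has spare capacity, replace any vertex of $Q$ by $m$. Otherwise $m$ is used $M$ times, so there are at least $M+1$ mappings in total, which contradicts $\mathrm{OPT}\leqslant T<M$. Hence we may assume every mapping contains $m$, and deleting $m$ gives a reduced solution of the same size. We also need $k-1\leqslant n-1$ for the recursion to be well defined, which holds. When $k-1=0$ the reduced problem must be interpreted appropriately; the condition $T<M$ with $k=1$ is impossible, so this case never arises.

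The main obstacle we expect is the first case: making the reduction to $\Sigma b'=kT$ rigorous and verifying column distinctness. We will state the wrap-around construction precisely, with vertex $i$'s cells at positions $c_i+1,\dots,c_i+b'_i$ taken mod $T$. The second case reduces to simple exchange arguments and the integer inequality.
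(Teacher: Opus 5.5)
Your proof is correct once one step is fixed, and its first branch takes a different route from the paper. The paper splits $\lfloor S/k\rfloor\geqslant M$ into two subcases. The equality subcase is handled by a ``densest instance'' argument, which rests on the unproved claim that moving capacity toward a larger vertex cannot increase $\PROBLEM$. The strict-inequality subcase is handled by repeatedly matching the $k$ largest capacities and decrementing them until equality is reached. Your McNaughton wrap-around covers both subcases with one explicit construction. Its correctness reduces to the checkable fact that $b'_i\leqslant T$ consecutive cells have distinct residues mod $T$, so it is more rigorous than the paper's argument. It also gives an optimal solution directly, since the columns change only at the at most $n$ vertex boundaries. The paper's greedy route, in exchange, is what yields the batched greedy algorithm in its appendix. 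In the second branch the paper only asserts that the max vertex may be used in all matches. Your exchange argument proves this, and you also check the lifting direction, which the paper leaves implicit.

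The step to fix is your derivation of $\lfloor (S-M)/(k-1)\rfloor<M$, which starts from the wrong translation of the case condition. Since $M$ is an integer, $\lfloor S/k\rfloor<M$ is equivalent to $S<kM$, i.e.\ $S\leqslant kM-1$. The bound $S<k(M+1)$ you use only says $\lfloor S/k\rfloor\leqslant M$. From $S\leqslant kM+k-1$ you get only $(S-M)/(k-1)\leqslant M+1$, which is not enough. For example, $b=(5,5,1)$ with $k=2$ satisfies your bound, yet its reduced value is $6>M=5$; this instance actually lies in the first branch. With the correct bound, $S-M\leqslant (k-1)M-1$, hence $\lfloor (S-M)/(k-1)\rfloor\leqslant M-1$ for $k\geqslant 2$, and the lifting goes through. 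You could also bypass the inequality. Lifting any $\min(R,M)$ mappings of an optimal reduced solution gives $\mathrm{OPT}\geqslant\min(R,M)$, and together with $\mathrm{OPT}\leqslant T<M$ this forces $R<M$.
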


\begin{proof}
Comparing $\left\lfloor\sfrac{1}{k}{\sum{b_i}}\right\rfloor$ with $\max b_i$
results in one of the three outcomes:

\textbf{Case 1}: $\left\lfloor\sfrac{1}{k}{\sum{b_i}}\right\rfloor = \max b_i$.
Consider the "densest" possible instance: the first $k$ vertices have capacity $\max b_i$,
the capacity of the next vertex is $0 \leqslant b_{k+1} < \max b_i$, and the capacities
of the remaining vertices are all zero.
Matching together the first $k$ vertices $\max b_i$ times produces the desired solution.
Now consider any other -- "sparser" -- instance.
Note that moving capacity from vertex $i$ to vertex $j$ of capacity $b_j \geqslant b_i$
cannot improve $\PROBLEM$ value -- it can only decrease it.
Relabel the vertices such that $b_1 \geqslant b_2 \geqslant b_3 \geqslant \dots$
and then shift all the capacity to the left while keeping per-vertex capacity limited
by the upper bound.
By doing so we have transformed the instance into the "densest" possible instance,
for which the upper bound is attainable.
Therefore, the upper bound is attainable for all instances.

\textbf{Case 2}: $\left\lfloor\sfrac{1}{k}\sum{b_i}\right\rfloor > \max b_i$.
By multiplying both sides by $k$ and using the inequality
$q \left\lfloor p/q \right\rfloor \leqslant p$ we get $\sum{b_i} > k \max b_i$.
Therefore, there are at least $k$ vertices with non-zero capacities.
Let us match the $k$ vertices with the largest capacities and decrement their capacities by $1$.
The left-hand side of the original inequality condition is reduced exactly by one,
while the right-hand side either decreases by one or remains unchanged.
Therefore the new instance is either case (1) or again case (2).
Repeating this process eventually terminates in (1) and, together with the produced matches,
results in overall $\PROBLEM$ of $\left\lfloor\sfrac{1}{k}{\sum{b_i}}\right\rfloor$.

\textbf{Case 3}: $\left\lfloor\sfrac{1}{k}\sum{b_i}\right\rfloor < \max b_i$.
Since the vertex with the maximum capacity has excess capacity compared to the upper bound,
without loss of generality we can use this vertex in all matches.
Therefore $\PROBLEM$ value is determined solely by the remaining vertices,
i.e., how many times $K_{k-1}$ can be embedded into the remaining $n-1$ vertices.
This is an instance of $\PROBLEM(K_{n-1}, K_{k-1}, b \setminus \max b_i)$.

Note that case (3) can occur only when $k > 1$.
If $k = 1$, then we have $\left\lfloor\sfrac{1}{1}\sum{b_i}\right\rfloor \geqslant \max b_i$,
which is either case (1) or case (2).
This proves that the sequence of computations according to the stated formula always
terminates by producing a value.
\end{proof}

We can further simplify the recurrent formula.
Let's denote by $b^{(i)}$ the $i$-th largest capacity: $b^{(1)} \geqslant b^{(2)} \geqslant \dots \geqslant b^{(n)}$.
Then according to formula \ref{formula:knkkrec} the value of $\PROBLEM(K_n, K_k, b)$ equals to the one of the values in the sequence
$a_r=\left\lfloor\frac{1}{k-r} \left(\sum_{i \leqslant n}{b_i} - \sum_{i \leqslant r}{b^{(i)}}\right)\right\rfloor$
for $0 \leqslant r < k$.
On the other hand, all $a_r$ values are upper bounds on $\PROBLEM$ value.
Indeed, if we relax $r$ largest capacities to infinity, then in the optimal
solution each match would use each of the corresponding $r$ vertices exactly once.
Therefore $\PROBLEM$ value will be determined by the number of times $K_{k-r}$
can be matched using the remaining $n-r$ vertices.
$a_r$ is the trivial upper bound on this value.
Combining these two facts -- that all $a_r$ are upper bounds on $\PROBLEM$ and that
at least one of them is attainable -- leads to the following simplified formula devoid of conditionals:

\begin{align}\label{formula:knkkmin}
&\PROBLEM(K_n,K_k,b) = \notag\\
&\hspace{0.5in}\min\limits_{0 \leqslant r < k} \left\lfloor\frac{1}{k-r}\left(\sum_{i \leqslant n}{b_i} - \sum_{i \leqslant r}{b^{(i)}}\right)\right\rfloor\!.
\end{align}

By expanding this formula for the target $n$ and $k$ we arrive at the following closed-form expressions:
\begin{align}
&\PROBLEM(K_4, K_2, b) = \min\begin{pmatrix*}[l]
  \left\lfloor\sfrac{1}{2}\sum{b_i}\right\rfloor,\\
  \sum{b_i} - b^{(1)} \\
  \end{pmatrix*} \hspace{0.1in}\text{and} \label{formula:k4k2}\\
&\PROBLEM(K_4, K_3, b) = \min\begin{pmatrix*}[l]
  \left\lfloor\sfrac{1}{3}\sum{b_i}\right\rfloor,\\
  \left\lfloor\sfrac{1}{2}\left(\sum{b_i} - b^{(1)}\right)\right\rfloor,\label{formula:k4k3}\\
  \sum{b_i}-b^{(1)}-b^{(2)} \\
  \end{pmatrix*}\!,\\
&\hspace{0.5in}\text{where $b^{(i)}$ is the $i$-th largest capacity.}\notag
\end{align}

As a final remark, let us provide some further theoretical insight into $\PROBLEM(G_n, G_k, b)$.
The sequence $\{a_r\}$ is in fact unimodal (see Appendix \ref{appendix:unimodal} for the proof).
Given a problem with large $k$, it would be natural to use
a relevant optimization -- such as golden-section search -- to find the minimum.
However, this would not improve the asymptotic runtime, as the latter is dominated
by the time it takes to find $k$ largest values inside $K_n$, which requires $\mathrm{\Omega}(n)$ time.
Using formula \ref{formula:knkkrec} appears to be a better choice.
Due to economical considerations it might be expected that capacities of the vertices
are well-balanced, i.e., only a few vertices have excess capacity.
Thus, the conditional statements inside formula \ref{formula:knkkrec} are beneficial:
they short-circuit the formula after computing only few $a_r$ terms.

\section{\texorpdfstring{$\PROBLEM(CQ_3, K_2, b)$}{VMCAP(CQ3, K2, b)}}\label{section:cq3_k2}

This scenario is the most complex among those considered in this study.
We attribute this complexity to the combination of $G$ being non-bipartite and its large size.


We will call the capacity of vertex $i$ \textit{normalized} if it is not
larger than the sum of capacities of its adjacent vertices:
$b_i \leqslant  \sum_{(i,j) \in E(G)}{b_j}$.
We will denote by $\overline{b_i}$ the capacities satisfying this property.
The following general lemma introduces a normalization procedure that
can be applied to a $\PROBLEM$ problem with any $G$ and $g$.
\begin{lemma}\label{lemma:normalization}
After a single round of replacing vertex capacities with
$\min(b_i, \sum_{(i,j) \in E(G)}{b_j})$ $\forall i \in V' \subseteq V(G)$
1) all vertex capacities in $V'$ become normalized and 2) $\PROBLEM$ value remains unchanged.
\end{lemma}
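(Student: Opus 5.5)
The plan is to prove the two claims separately, writing $S_i = \sum_{(i,j)\in E(G)} b_j$ for the neighbour sum of the \emph{original} capacities. The new capacities are $b'_i=\min(b_i,S_i)$ for $i\in V'$ and $b'_i=b_i$ otherwise. The word ``single round'' matters: every $b'_i$ is computed from the original vector $b$, not from partially updated values.

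\textbf{Claim 2 (the value is unchanged).} I would argue this first, since it is the conceptual core. Because $b'\leqslant b$ componentwise, every feasible solution $\{x_p\}$ for $b'$ is feasible for $b$, so the value cannot increase. For the converse, take any feasible solution $\{x_p\}$ for $b$ and fix a vertex $i$. Every embedding $p$ with $i\in V(p)$ maps $g$ onto a subgraph of $G$. Since $g$ is connected with at least two vertices, the vertex of $g$ mapped to $i$ has an incident edge. That edge is mapped to an edge $(i,j)\in E(G)$ with $j\neq i$ and $j\in V(p)$, so every embedding that uses $i$ also uses at least one neighbour of $i$. Summing over neighbours gives
\begin{align*}
\sum_{p:\,i\in V(p)} x_p \;\leqslant\; \sum_{(i,j)\in E(G)}\;\sum_{p:\,j\in V(p)} x_p \;\leqslant\; \sum_{(i,j)\in E(G)} b_j = S_i .
\end{align*}
Together with the original constraint at $i$, the load on $i$ is at most $\min(b_i,S_i)=b'_i$. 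Hence $\{x_p\}$ is also feasible for $b'$, and the two optima coincide.

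\textbf{Claim 1 (vertices in $V'$ become normalized).} The subtlety is that neighbours of $i$ may themselves lie in $V'$ and shrink, so it is not immediate that $b'_i\leqslant \sum_{(i,j)\in E(G)} b'_j$. I would handle this by a two-case argument using only the original values. If some neighbour $j$ of $i$ was truncated, i.e.\ $b'_j=S_j<b_j$, then $S_j$ contains the term $b_i$ because $i$ is adjacent to $j$. This gives $\sum_{(i,j)} b'_j \geqslant b'_j = S_j \geqslant b_i \geqslant b'_i$. Otherwise every neighbour keeps its capacity, so $\sum_{(i,j)} b'_j = S_i \geqslant b'_i$. In both cases $b'_i$ is normalized with respect to $b'$.

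I expect Claim 1 to be the step needing the most care, precisely because of this interaction between simultaneously updated neighbours. The case split above resolves it cleanly, and I would stress in the write-up that it relies on the updates using the original $b$.
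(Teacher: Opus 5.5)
Your proof is correct and follows the paper's argument. For Claim 2 you use the same observation as the paper: every copy of $g$ that uses $i$ also uses a neighbour of $i$. You make it rigorous in both directions of the feasibility comparison. For Claim 1 you split on whether some neighbour $j$ of $i$ was truncated, whereas the paper splits on whether $i$ itself was. This lets you skip the paper's intermediate observation that two adjacent vertices cannot both shrink, but the key inequality $b'_j = S_j \geqslant b_i \geqslant b'_i$ is the same.
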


\begin{proof}
1) Suppose that the capacities of two adjacent vertices $i$ and $j$ were
both changed during normalization.
The capacity of $i$ could change only if
$b_i > b_j + \sum_{(i,k \ne j) \in E(G)}{b_k} \geqslant b_j$.
Similarly, it must be that $b_j > b_i$.
Both inequalities cannot hold simultaneously, leading to the conclusion that
if the capacity of some vertex $i$ was changed, then capacities of all its adjacent vertices
remained the same, which ensures that $i$ is now normalized.
Otherwise, if the capacity of some vertex $i$ remained the same after normalization,
then the capacity of each of its adjacent vertices could be reduced to no less than $b_i$,
proving again normalization property for $i$.
2) Recall from the problem statement that $g$ is a connected graph with at least two vertices,
i.e., there are no isolated vertices.
When embedding such $g$ into $G$, a vertex in $G$ can only be matched simultaneously
with at least one of its neighbours and thus cannot participate in more matchings
than the sum of capacities of its neighbours.
Therefore, the normalization procedure preserves the value of $\PROBLEM(G, g, b)$.
\end{proof}

\begin{lemma}\label{lemma:cq3_k2_l4}
Consider the ladder graph $L_4$ with vertex labeling as shown in Fig. \ref{figure:cq3_k2_l4},
\begingroup
\setlength\arraycolsep{0.5pt}
\begin{align}\label{formula:cq3_k2_l4}
\PROBLEM(L_4, K_2, b) = \min\begin{pmatrix*}[l]
\overline{b_1}&+&b_3&+&b_5&+&\overline{b_7}, \\
\overline{b_2}&+&b_4&+&b_6&+&\overline{b_8} \end{pmatrix*}\!.
\end{align}
\endgroup
\end{lemma}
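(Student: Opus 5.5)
The plan is to identify $\PROBLEM(L_4,K_2,b)$ with a maximum $b$-matching in a bipartite graph. By min--max duality, its value is then the minimum weight of a vertex cover, and it remains to show that the two expressions in (\ref{formula:cq3_k2_l4}) are the only covers that can be minimal once the end vertices are normalized.

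I read the labeling of Fig.~\ref{figure:cq3_k2_l4} as follows, and I have not checked it against the figure. The rungs are $(1,2),(3,4),(5,6),(7,8)$ and the rails are $1\!-\!4\!-\!5\!-\!8$ and $2\!-\!3\!-\!6\!-\!7$. Then $L_4$ is bipartite with colour classes $O=\{1,3,5,7\}$ and $E=\{2,4,6,8\}$. The end vertices $1,2,7,8$ have degree $2$, and the overlined capacities are $\overline{b_1}=\min(b_1,b_2+b_4)$, $\overline{b_2}=\min(b_2,b_1+b_3)$, $\overline{b_7}=\min(b_7,b_6+b_8)$ and $\overline{b_8}=\min(b_8,b_5+b_7)$. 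If the figure differs, only the bookkeeping below changes.

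\textbf{Step 1 (reduction and upper bound).} Apply Lemma~\ref{lemma:normalization} with $V'=\{1,2,7,8\}$. This replaces $b_1,b_2,b_7,b_8$ by their overlined versions without changing $\PROBLEM$, so from now on these four capacities are normalized. Since every copy of $K_2$ uses exactly one vertex of $O$ and one of $E$, the value is at most both $\sum_{i\in O}b_i$ and $\sum_{i\in E}b_i$. This gives the ``$\leqslant$'' direction of (\ref{formula:cq3_k2_l4}).

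\textbf{Step 2 (lower bound via K\"onig).} Pass to the uncapacitated graph $G'$ from Section~\ref{section:c4_k2}; it is bipartite because $L_4$ is. By K\"onig's theorem, its maximum matching equals its minimum vertex cover. A minimal vertex cover of $G'$ contains either all copies of a vertex of $L_4$ or none of them. Hence the minimum cover of $G'$ is the minimum over vertex covers $C$ of $L_4$ of $b(C)=\sum_{i\in C}b_i$.

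Equivalently, $\PROBLEM=\sum_i b_i-\max_I b(I)$, where $I$ ranges over independent sets of $L_4$. The covers $C=O$ and $C=E$ give the two terms of the formula. It therefore suffices to show that every other minimal cover has weight at least $b(O)$ or at least $b(E)$.

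\textbf{Step 3 (case analysis, the main obstacle).} Enumerate the maximal independent sets $I$ of $L_4$ that meet both $O$ and $E$, using the two reflection symmetries of the ladder to cut down the cases. For each such $I$, produce a chain of inequalities showing $b(V\setminus I)\geqslant b(O)$ or $\geqslant b(E)$. The only tools available are nonnegativity and the four normalization inequalities at the ends, since the inner vertices $3,4,5,6$ are not normalized.

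For example, take $I=\{1,6,8\}$, so $C=\{2,3,4,5,7\}$. Then $b(C)=b_2+b_4+b_3+b_5+b_7\geqslant \overline{b_1}+b_3+b_5+b_7=b(O)$, using $\overline{b_1}\leqslant b_2+b_4$.

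The difficulty I expect is with mixed independent sets that take inner vertices from both colour classes, for instance $\{1,3,6,8\}$ (if independent under the true labeling) or $\{1,6\}$-type sets. For those, the ``swapped'' part of the cover is not a single end vertex together with its neighbourhood. My first approach would be to show that any maximal mixed $I$ splits the ladder at a rung into a left part lying in one colour class and a right part lying in the other. The cover then charges an end vertex's neighbourhood on one side, where normalization absorbs it. If some mixed set escapes this pattern because an inner vertex is uncovered, I would look for a more careful pairing, or else conclude that the formula needs an additional term.
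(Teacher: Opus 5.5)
Your route is genuinely different from the paper's. The paper proves the lower bound constructively: match $1$ with $2$ as often as possible, send the excess of the larger one to its other neighbour (normalization makes this possible), and do the same for $7,8$. The residual $C_4$ on $3,4,5,6$ is then solved by formula (\ref{formula:c4_k2}), which exhausts a whole colour class. You argue through K\"onig duality instead, which gives an optimality certificate (a vertex cover) rather than a placement. Your labeling agrees with the one the paper uses (compare the terms $\min(b_1-x,b_2+b_4-y)$ and $\min(b_8-y,b_5+b_7-x)$ in the proof of Lemma \ref{lemma:cq3_k2}), and Steps 1 and 2 are correct.

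The gap is Step 3, which carries the entire content of the lower bound. You check one cover, speculate about sets you have not verified to be independent, and leave open whether ``the formula needs an additional term''. As it stands, the proposal does not prove the lemma.

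What is missing is a description of the maximal independent sets of $L_4$. View the rungs $\{1,2\},\{4,3\},\{5,6\},\{8,7\}$ as columns, with top row $1,4,5,8$ and bottom row $2,3,6,7$.
\begin{itemize}
\item An independent set uses at most one vertex per column, and non-empty adjacent columns must use opposite rows.
\item Maximality forbids an empty end column and two adjacent empty columns.
\item Maximality also forces an empty interior column to be flanked by opposite rows; otherwise one of its vertices could be added.
\end{itemize}
So apart from $O$ and $E$ there are exactly four maximal independent sets: $\{1,6,8\}$, $\{2,5,7\}$, $\{1,3,8\}$ and $\{2,4,7\}$. They form one orbit under the two reflections of the ladder. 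Their complements are, respectively, $(O\setminus\{1\})\cup\{2,4\}$, $(E\setminus\{2\})\cup\{1,3\}$, $(E\setminus\{8\})\cup\{5,7\}$ and $(O\setminus\{7\})\cup\{6,8\}$.

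Each complement is a colour class with one end vertex replaced by its neighbourhood, so a single inequality closes each case. In the same order these are $\overline{b_1}\leqslant\overline{b_2}+b_4$, $\overline{b_2}\leqslant\overline{b_1}+b_3$, $\overline{b_8}\leqslant b_5+\overline{b_7}$ and $\overline{b_7}\leqslant b_6+\overline{b_8}$. These involve the \emph{normalized} neighbour, which is exactly what part 1 of Lemma \ref{lemma:normalization} guarantees even though adjacent end vertices are normalized simultaneously.

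This is precisely your ``split at a rung'' picture. The sets you worried about do not occur: $\{1,3,6,8\}$ contains the rail edge $(3,6)$, and $\{1,6\}$ is not maximal. With this enumeration added, your proof is complete and no extra term appears.
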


\begin{figure}[ht]
\centering
\includegraphics[width=0.5\columnwidth]{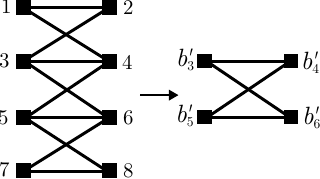}
\caption{Ladder graph $L_4$}
\label{figure:cq3_k2_l4}
\end{figure}

\begin{proof}
The value is clearly an upper bound on $\PROBLEM(L_4, K_2, b)$ since $L_4$ is bipartite and
normalization of vertex capacities doesn't lead to the loss of generality according
to Lemma \ref{lemma:normalization}.
Now let us construct a matching that achieves this upper bound.
After normalizing vertices $1$ and $2$ match them together as many times as possible
and, if one of them still has non-zero capacity left, match it with the remaining neighbour.
This procedure will reduce capacities of both vertices $1$ and $2$ to zero, since we have used normalized values.
Repeat the steps for vertices $7$ and $8$.
We are left with $C_4$ graph $(3,4,5,6)$ with some remaining capacities $b'_3, b'_4, b'_5, b'_6$.
Formula to compute its $\PROBLEM$ value was already derived in section
\ref{section:c4_k2} -- it is $\min(b'_3+b'_5, b'_4+b'_6)$, i.e.,
in a maximum matching either both $3$ and $5$ are zeroed, or both $4$ and $6$ are zeroed.
Altogether this means that we have constructed a feasible matching where either all
odd-indexed vertices or all even-indexed vertices are zeroed.
Cardinality of such matching cannot be lower than what is postulated by the lemma.
\end{proof}


\begin{figure}[ht]
\centering
\includegraphics[width=\columnwidth]{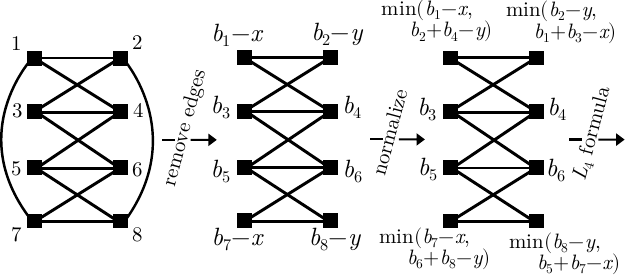}
\caption{Sequence of steps to derive optimization problem \ref{formula:cq3_k2_optx}}
\label{figure:cq3_k2}
\end{figure}

\begin{lemma}\label{lemma:cq3_k2}
$\PROBLEM(CQ_3, K_2, b)$ can be calculated as follows:
\begingroup
\setlength\arraycolsep{0.5pt}
\begin{align}
\label{formula:cq3_k2}
&\PROBLEM\left(CQ_3, K_2, b\right) = \min \begin{pmatrix*}[l]
b_1&+&b_3&+&b_5&+&b_7&-&\delta,\\
b_2&+&b_4&+&b_6&+&b_8&+&\delta,\\
b_2&+&b_3&+&b_4&+&b_5&+&b_7,\\
b_1&+&b_3&+&b_5&+&b_6&+&b_8,\\
b_2&+&b_4&+&b_5&+&b_6&+&b_7,\\
b_1&+&b_3&+&b_4&+&b_6&+&b_8
\end{pmatrix*}\!,\\
&\hspace{0.4in}\text{where }\delta = \lfloor\sfrac{1}{2}\cdot(\Sigma b_{2i+1} -\Sigma b_{2i})\rfloor \notag\\
&\hspace{0.4in}\text{clamped to } -\!\min(b_2,b_8) \leqslant \delta \leqslant \min(b_1,b_7). \notag
\end{align}
\endgroup
\end{lemma}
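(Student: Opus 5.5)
The plan is to view $CQ_3$ as the ladder $L_4$ of Lemma~\ref{lemma:cq3_k2_l4} with two extra ``crossing'' edges. I am guessing at the labeling in Fig.~\ref{figure:cq3_k2}, but the clamping bounds on $\delta$ strongly suggest it: take $CQ_3 = L_4 \cup \{(1,7),(2,8)\}$. Under this labeling the degree-2 corners of the ladder become degree 3, giving a 3-regular graph. It is non-bipartite, since $(1,7)$ joins two odd vertices and $(2,8)$ joins two even ones. Let $x$ be the number of matches placed on $(1,7)$ and $y$ the number on $(2,8)$.

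The first step is an exchange argument showing that we may assume $\min(x,y)=0$. If $x,y\geqslant 1$, replace one $(1,7)$ match and one $(2,8)$ match by one $(1,2)$ match and one $(7,8)$ match. Every vertex keeps the same load and the cardinality is unchanged. So some optimal solution uses only one of the two crossing edges, say $\delta$ times, where $\delta>0$ means edge $(1,7)$ and $\delta<0$ means edge $(2,8)$ used $|\delta|$ times. Then
\begin{align*}
\PROBLEM(CQ_3,K_2,b)=\max_{-\min(b_2,b_8)\leqslant\delta\leqslant\min(b_1,b_7)} \bigl(|\delta| + \PROBLEM(L_4,K_2,b^{\delta})\bigr),
\end{align*}
where $b^{\delta}$ is $b$ with $b_1,b_7$ decreased by $\delta$ if $\delta>0$, or $b_2,b_8$ decreased by $|\delta|$ if $\delta<0$. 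This explains the clamping range in the statement. The inner value is given in closed form by Lemma~\ref{lemma:cq3_k2_l4}.

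For the upper bound I will argue directly rather than through the maximization.
\begin{itemize}
\item Rows 3--6 of (\ref{formula:cq3_k2}) are weighted vertex covers of $CQ_3$: each of the sets $\{2,3,4,5,7\}$, $\{1,3,5,6,8\}$, $\{2,4,5,6,7\}$ and $\{1,3,4,6,8\}$ must meet every edge, including $(1,7)$ and $(2,8)$. I will check this case by case. Since every match uses at least one vertex of a cover, each row bounds the number of matches.
\item For rows 1--2, every match other than a crossing match uses exactly one odd and one even vertex. Hence the number of matches is at most $\Sigma b_{odd}-x$ and at most $\Sigma b_{even}-y$. Combined with $\min(x,y)=0$, this gives $\min(\Sigma b_{odd}-\delta,\ \Sigma b_{even}+\delta)$.
\item Maximizing this over integer $\delta$ in the allowed range gives the balancing value $\delta=\lfloor\frac12(\Sigma b_{odd}-\Sigma b_{even})\rfloor$, clamped. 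Rows 1--2 with this particular $\delta$ are therefore a valid upper bound.
\end{itemize}

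For attainability I fix this $\delta$, place $|\delta|$ matches on the corresponding crossing edge, and apply Lemma~\ref{lemma:cq3_k2_l4} to the residual ladder. The normalized values there are
\begin{align*}
\overline{b_1}=\min(b_1',\,b_2'+b_3),\quad \overline{b_7}=\min(b_7',\,b_5+b_8'),
\end{align*}
and analogously for $\overline{b_2}$ and $\overline{b_8}$. Expanding the nested minima turns the ladder formula into a minimum of a handful of linear terms. Adding $|\delta|$ back, the terms that keep the unnormalized corners reproduce rows 1--2. The terms in which a corner is replaced by its neighbour sum should collapse, after the $\pm\delta$ cancels, to the $\delta$-free cover rows 3--6, or to expressions dominated by them.

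The main obstacle is this last bookkeeping. Some expanded terms still contain $\delta$, for example $\delta+(b_2+b_3)+b_3+b_5+(b_7-\delta)$, and for these I must show that they are never smaller than the stated minimum at the chosen $\delta$. The first thing I will try is to use the balancing property of $\delta$, namely $|(\Sigma b_{odd}-\delta)-(\Sigma b_{even}+\delta)|\leqslant 1$, together with the clamp, to dominate each such term by row 1, row 2, or one of rows 3--6. If that fails, the fallback is a short case split on whether the clamp is active.
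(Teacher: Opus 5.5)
Your route is the paper's: delete the crossing edges $(1,7)$ and $(2,8)$, apply Lemma~\ref{lemma:cq3_k2_l4} to the residual ladder with normalized corners, expand the nested minima, and optimize over $\delta=x-y$. The trouble is that the step you postpone as the ``main obstacle'' is essentially the whole proof, and you have set it up on the wrong graph.

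Your corners $\overline{b_1}=\min(b_1',b_2'+b_3)$ and $\overline{b_7}=\min(b_7',b_5+b_8')$ make $(1,3)$ and $(5,7)$ ladder edges. These are odd--odd edges, which contradicts the odd/even bipartition that the formula of Lemma~\ref{lemma:cq3_k2_l4} presupposes. The labeling consistent with that lemma, and with rows 3--6 being vertex covers, has rails $1$--$4$--$5$--$8$ and $2$--$3$--$6$--$7$ and rungs $(1,2),(3,4),(5,6),(7,8)$. So the correct corners are $\overline{b_1}=\min(b_1-x,\,b_2-y+b_4)$, $\overline{b_2}=\min(b_2-y,\,b_1-x+b_3)$, $\overline{b_7}=\min(b_7-x,\,b_6+b_8-y)$ and $\overline{b_8}=\min(b_8-y,\,b_5+b_7-x)$.

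With your adjacencies the expansion yields terms such as $b_2+2b_3+b_5+b_7$. In your own example $\delta$ already cancels; the real defect is the doubled $b_3$. Such terms are not dominated by the six rows. For $b=(10,1,0,10,1,10,1,10)$ this term equals $3$, whereas the formula and the true optimum are both $13$. No balancing argument or clamp case split can repair that.

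With the correct corners, $x+y$ plus the ladder formula splits into eight linear terms. For every admissible $(x,y)$ they depend on $(x,y)$ only through $x-y$:
\begin{itemize}
\item keeping both corners gives rows 1--2;
\item each of the four mixed choices gives exactly one of rows 3--6, e.g.\ $x+y+(b_2-y+b_4)+b_3+b_5+(b_7-x)=b_2+b_3+b_4+b_5+b_7$;
\item replacing both corners gives $\Sigma b_{2i}+b_3+b_5+(x-y)$ and $\Sigma b_{2i+1}+b_4+b_6-(x-y)$, which are dominated by rows 2 and 1 respectively.
\end{itemize}
The balancing choice of $\delta$ is then needed only to maximize the tent $\min(\Sigma b_{2i+1}-\delta,\,\Sigma b_{2i}+\delta)$ over $[-\min(b_2,b_8),\min(b_1,b_7)]$, since rows 3--6 do not involve $\delta$.

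Because $\max_{x,y}\,[x+y+\PROBLEM(L_4,K_2,\cdot)]$ is an exact decomposition, this gives both inequalities at once. Your exchange argument and your separate vertex-cover/parity bound are valid under the correct labeling, but they are unnecessary.

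If you keep the parity bound, fix the count. A $(1,7)$ match uses two odd units and a $(2,8)$ match uses none, so $M+x-y\leqslant\Sigma b_{2i+1}$ and $M-x+y\leqslant\Sigma b_{2i}$. The separate inequalities you wrote, $M\leqslant\Sigma b_{2i+1}-x$ and $M\leqslant\Sigma b_{2i}-y$, are false in general. The combined bound you state after imposing $\min(x,y)=0$ is nevertheless correct.
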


\begin{proof}
Let us redraw $CQ_3$ graph as depicted in Fig. \ref{figure:cq3_k2} on the left.
Then by introducing two variables $x,y \in \mathbb{Z}_{\geqslant 0}$ -- number of times edges $(1,7)$ and $(2,8)$ are matched respectively --
$\PROBLEM\left(CQ_3, K_2, b\right)$ can be expressed as the following optimization problem:
\begingroup
\setlength\arraycolsep{0.5pt}
\begin{align}
\label{formula:cq3_k2_optx}
&\PROBLEM\left(CQ_3, K_2, b\right) =\notag\\
&\hspace{0.1in}\maximize\limits_{x,y}\; x + y + \min \begin{pmatrix*}[c]
\min(b_1\!-\!x, b_2\!+\!b_4\!-\!y) + \\
b_3 + b_5 + \\
\min(b_7\!-\!x, b_6\!+\!b_8\!-\!y), \\[0.4em]
\min(b_2\!-\!y, b_1\!+\!b_3\!-\!x) + \\
b_4 + b_6 + \\
\min(b_8\!-\!y, b_5\!+\!b_7\!-\!x) \\
\end{pmatrix*}=\notag\\
&\hspace{0.1in}\maximize\limits_{x,y}\; \min \begin{pmatrix*}[l]
b_1&+&b_3&+&b_5&+&b_7&-&(x-y),\\
b_2&+&b_4&+&b_6&+&b_8&+&(x-y),\\
b_2&+&b_3&+&b_4&+&b_5&+&b_7,\\
b_1&+&b_3&+&b_5&+&b_6&+&b_8,\\
b_2&+&b_4&+&b_5&+&b_6&+&b_7,\\
b_1&+&b_3&+&b_4&+&b_6&+&b_8
\end{pmatrix*}\!.
\end{align}
\endgroup

To obtain this formula (see Fig. \ref{figure:cq3_k2}), we removed edges $(1, 7)$ and $(2,8)$ from $CQ_3$,
subtracted $x$ and $y$ from the capacities of their endpoints,
normalized the remaining capacities of the vertices $1$, $2$, $7$, $8$,
and plugged the latter values into the previously obtained formula \ref{formula:cq3_k2_l4}
for the remaining $L_4$ graph.
Next we expanded the expression by using the equality
$p+\min(q,r) =\allowbreak \min(p+q, p+r)$.
While doing so, we removed the terms $b_1+b_3+b_4+b_5+b_6+b_7-(x-y)$ and $b_2+b_3+b_4+b_5+b_6+b_8+(x-y)$
since they are not less than $b_1+b_3+b_5+b_7-(x-y)$ and $b_2+b_4+b_6+b_8+(x-y)$ respectively.
The optimal solution to the obtained six-term expression is achieved when $x-y$ is selected
(Fig. \ref{figure:cq3_k2_optx}) as integer value closest to
$\frac{1}{2}\left(\sum{b_{2i+1}}-\sum{b_{2i}}\right)$
and satisfying boundary conditions: $-\min(b_2,b_8) \leqslant x-y \leqslant \min(b_1, b_7)$.
\end{proof}

\begin{figure}[ht]
\centering
\includegraphics[width=\columnwidth]{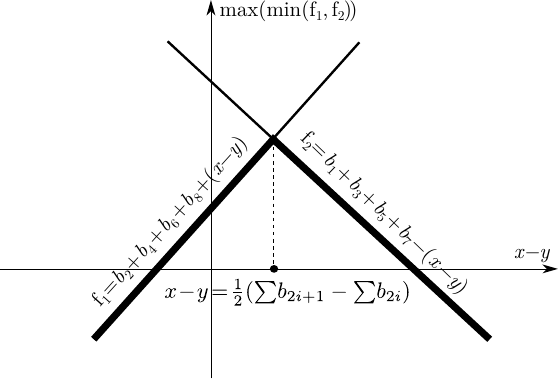}
\caption{Illustration of deriving $x\!-\!y$ value that optimizes expression \ref{formula:cq3_k2_optx}}
\label{figure:cq3_k2_optx}
\end{figure}

Note that the slopes of the two functions are symmetric.
Consequently, when computing $\delta$ in the formula \ref{formula:cq3_k2} it is irrelevant whether the floor
or ceiling function is used, or how their implementation behaves in the domain of negative numbers,
e.g. whether $\left\lfloor\frac{-5}{2}\right\rfloor=-2 \text{ or } -3$.

\section{\texorpdfstring{$\PROBLEM(CQ_3, C_4, b)$}{VMCAP(CQ3, C4, b)}}\label{section:cq3_c4}

There are four $C_4$ subgraphs in $CQ_3$ (Fig. \ref{fig:cq3_c4}).
They are connected into a cycle by means of the horizontal edges.
Consider some horizontal edge, e.g. $(3, 4)$.
It is shared by two $C_4$ subgraphs: $(2,1,4,3)$ and $(3,4,5,6)$.
Selecting one of these subgraphs as $g$ reduces the capacities of both $3$ and $4$ simultaneously by one,
and there is no way to reduce the capacity of $3$ or $4$ independently of each other.
The same reasoning applies to the other three horizontal edges.
This results in the equivalent transformation depicted in Fig. \ref{fig:cq3_c4} on the right.
Each vertex in the transformed graph represents one of the horizontal edges of $CQ_3$,
while each edge represents one of the $C_4$ subgraphs of $CQ_3$.
Selecting one of the $C_4$ in $CQ_3$ is equivalent to selecting the corresponding edge
in the transformed graph.
The latter problem is effectively $\PROBLEM(C_4, K_2, b)$, which has already been solved.
Thus, we have:

\begin{figure}[t]
\centering
\includegraphics[width=0.8\columnwidth]{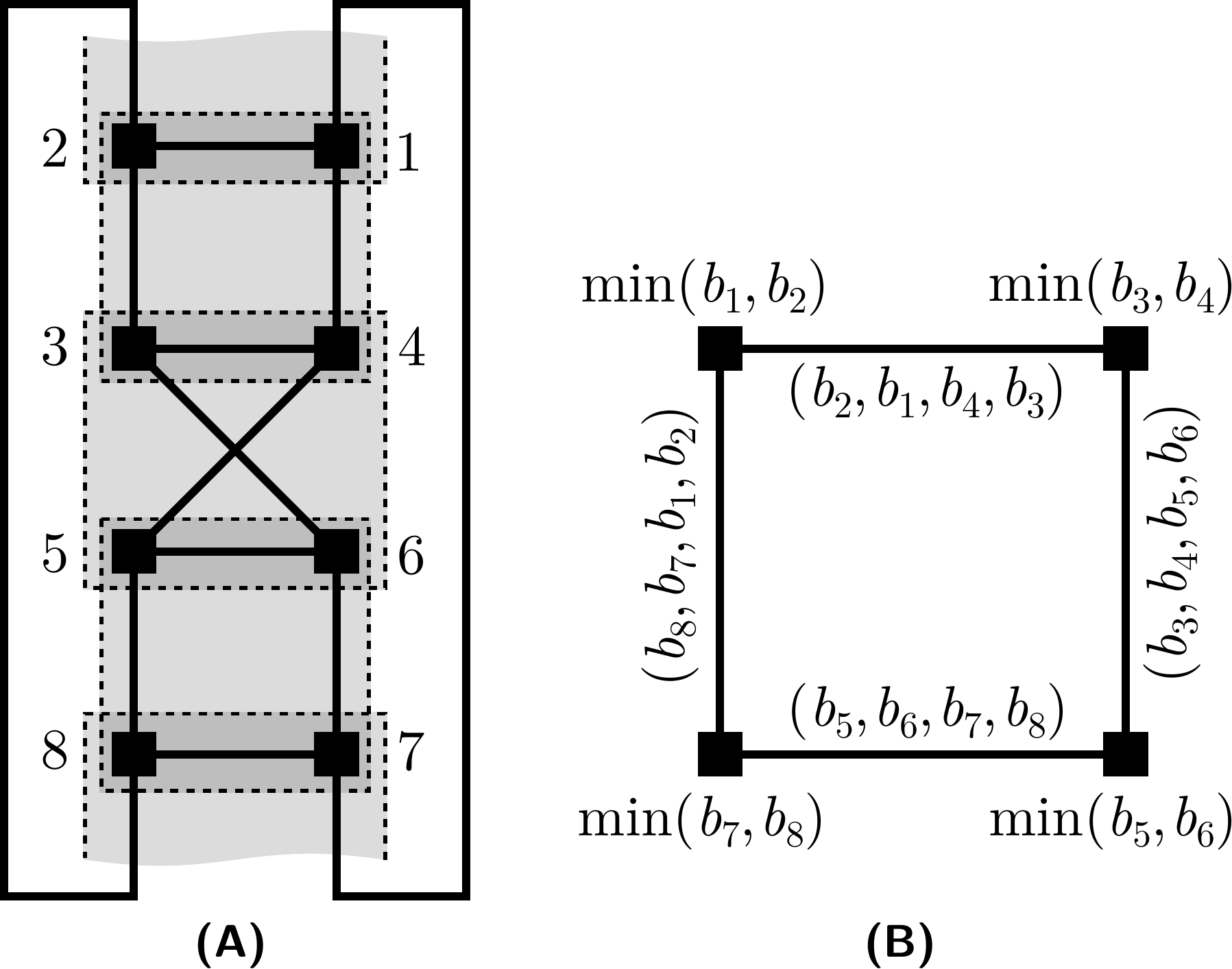}
\caption{(A) Cycles in CQ3 and (B) corresponding equivalent transformation}
\label{fig:cq3_c4}
\end{figure}

\begin{align}\label{formula:cq3_c4}
&\PROBLEM(CQ_3, C_4, b) =\notag\\
&\hspace{0.25in}\PROBLEM(C_4, K_2, [\min(b_1,b_2), \min(b_3,b_4),\notag\\
&\hspace{0.25in}\hspace{1.1in}\min(b_5,b_6), \min(b_7,b_8)]) =\notag\\
&\hspace{0.25in}\min\begin{pmatrix*}[l]\min(b_1,b_2)+\min(b_5, b_6),\\
\min(b_3, b_4)+\min(b_7,b_8)\end{pmatrix*}\!.
\end{align}

\section{\texorpdfstring{$\PROBLEM(Q_{3,3}, K_2, b)$}{VMCAP(Q33, K2, b)}}\label{section:q33_k2}

It can be observed that $Q_{3,3}$ is isomorphic to the complete bipartite graph $K_{4,4}$
(see table \ref{table:combinations}, middle graph).
One of its parts contains all odd-indexed vertices, while the other one -- all even-indexed vertices.
Thus, all odd-indexed vertices satisfy the requirements of Lemma \ref{lemma:vertex-merging}
and therefore can be merged into a single vertex of capacity $b_1+b_3+b_5+b_7$.
The same reasoning applies to even-indexed vertices, yielding the following formula:
\begingroup
\setlength\arraycolsep{0.5pt}
\begin{align} \label{formula:q33_k2}
\PROBLEM\left(Q_{3,3}, K_2, b\right) = \min
\begin{pmatrix*}[l]
b_1&+&b_3&+&b_5&+&b_7, \\
b_2&+&b_4&+&b_6&+&b_8
\end{pmatrix*}\!.
\end{align}
\endgroup

This formula is naturally generalized for arbitrary-sized complete bipartite graphs $K_{m,n}$.
If the vertices are labeled such that vertices $1 \mathinner{\ldotp\ldotp} m$ are one of its parts, then
\begin{align} \label{formula:knm_k2}
\PROBLEM(K_{m,n}, K_2, b) = \min\left(\sum\limits_{i=1}^{m}{b_i}, \sum\limits_{i=1}^{n}{b_{m+i}}\right)\!.
\end{align}
The earlier formula \ref{formula:c4_k2} for $\PROBLEM(C_4, K_2, b)$ is a special
case of formula \ref{formula:knm_k2} with $m\!=\!n\!=\!2$ since $C_4 \sim K_{2,2}$.

\section{\texorpdfstring{$\PROBLEM(Q_{3,3}, C_4, b)$}{VMCAP(Q33, C4, b)}}\label{section:q33_c4}

We again use the representation of $Q_{3,3}$ as bipartite $K_{4,4}$.
Any pair of distinct vertices in its left part together with any pair of distinct vertices
in its right part forms a $C_4$ subgraph, and there are no other $C_4$ subgraphs.
Therefore $\PROBLEM$ value is the minimum between the maximum number of vertex pairs
which can be selected from the left part and the maximum number of vertex pairs
which can be selected from the right part, subject to capacity constraints.
But the maximum number of vertex pairs in each part is exactly the $\PROBLEM(K_4, K_2, b)$ problem
since selecting a pair of distinct vertices in some part is equivalent to matching two vertices in $K_4$.
Hence we have:

\begin{align} \label{formula:q33_c4}
&\PROBLEM\left(Q_{3,3}, C_4, b\right) = \notag \\
&\hspace{0.4in}\min \begin{pmatrix*}[l]
\PROBLEM\left(K_4, K_2, \left[b_1, b_3, b_5, b_7\right]\,\right),\\
\PROBLEM\left(K_4, K_2, \left[b_2, b_4, b_6, b_8\right]\,\right)
\end{pmatrix*}\!.
\end{align}

\section{Related work}

The goal of computing $\PROBLEM$ in constant time using closed-form expressions
while supporting NUMA architectures beyond $K_n$ makes the problem at hand unusual,
positioned between applied research on NUMA-awareness and more theoretical works
on vertex matching problems.


In the context of NUMA-aware scheduling, many works are dedicated
to system-level engineering, such as investigating how NUMA scheduling policies
affect performance or reverse-engineering inter-NUMA cache coherence protocols.
However, few works focus on the discrete optimization aspects induced
by multi-NUMA topologies.
These few works broadly aim to solve one of the two following problems:
online scheduling and VM consolidation.
Online scheduling seeks to find a place for a VM (which may be multi-vNUMA)
coming from an ahead-unknown sequence of VMs with the aim to maximize
the number of successfully scheduled VMs.
Besides simple FirstFit- and BestFit-like heuristics, it can be approached
by using reinforcement learning \cite{sheng2022learning}
or complex value-driven heuristics \cite{ma2026topology}.
Consolidation is another type of problem.
It seeks to migrate VMs inside the cluster in order to optimize
target metric while keeping the number of migrations low.
For example, \cite{zhu2024phecon} provides a greedy heuristic to consolidate VMs while
optimizing the cluster-wide $\PROBLEM$ metric for specific flavor.
\cite{hu2020virtual} introduces a swarm optimization algorithm to reduce the number
of used servers, and its formulation admits even asymmetric multi-vNUMA VMs,
i.e. different resource demands for different vNUMA nodes.
\cite{sun2024numa} uses column generation to find an assignment of multi-NUMA VMs
that aims to maximize free space expressed as $\sum_f{w_f \cdot \PROBLEM_f}$,
where $w_f$ is the weight of the flavor $f$, and $\PROBLEM_f$ is its global per-cluster capacity.
However, all these works are restricted to $\PROBLEM(K_n, K_k, b)$ scenarios.
Balls-and-bins model introduced in \cite{rai2012generalized} is an interesting
unified approach to model many variants of topology-aware scheduling, including
NUMA nodes, although it does not provide any specific algorithms.


In general, cloud providers do not explicitly reveal NUMA-related information,
such as the number of vNUMA nodes of a given flavor, its vNUMA topology ($K_2$, $C_4$, etc)
or the policy of vNUMA-to-pNUMA mapping.
The first two items can be viewed after acquiring a VM by using built-in OS tools,
such as \textit{lstopo} in Linux.
An outdated (but still functioning) online resource \cite{instaguideInstancePrice}
provides historical information on Amazon flavors, including vNUMA topology.


Matching problems are the cornerstone of combinatorial optimization.
However, nearly all studies in this area are theoretical in nature.
It is also hard to find a good algorithm implementation for formulations
beyond classical variants of (uncapacitated) vertex matching problem,
a point also noted in \cite{muller2000implementing}.
This likely stems from the numerous variants of matching problems:
weighted vs. unweighted, vertex- vs. edge-capacitated, etc.
Nevertheless, some works can be related to the investigated problem,
in particular to $\PROBLEM(K_n, K_k, b)$.
In \cite{gabrovvsek2020multiple} the authors provide many heuristics 
to decompose a full $k$-partite edge-weighted graph into disjoint $k$-cliques.
This can be adopted for a problem of scheduling VMs in $K_n$ pNUMA topologies if
inter-pNUMA links are different in performance.
A well-known Havel-Hakimi algorithm \cite{hakimi1962realizability} constructs,
when possible, a graph with given vertex degrees.
It is based on the idea of incrementally matching vertex with the largest
unassigned degree with the remaining vertices, an idea that we have also
used in the proof of Lemma \ref{lemma:knkkrec}.

\section{Conclusion}
In this paper we formulated the $\PROBLEM$ problem to compute the maximum number
of multi-NUMA virtual machines which can be scheduled into a multi-NUMA server.
We enumerated practical combinations of pNUMA and vNUMA topologies
and formally proved closed-form formulas for all of them.

We hope that these formulas will find application beyond the scope of cloud computing.
We suppose that they may be used within OS kernels for multithread scheduling,
since kernel-space is considered a "sterile" environment with restrictions on the
complexity of algorithms and is even devoid of floating-point arithmetic.
Solving $\PROBLEM(K_n, K_k, b)$ for large $n$ is also a recurring problem in distributed computing,
where the goal is to spread replicas of some service (such as a distributed database
or LLM inference) across different racks to ensure fault tolerance.

Assuming that NUMA topologies will evolve, it would be interesting to investigate
the possibility of deriving $\PROBLEM$ formulas automatically for a given combination of $G$ and $g$.
We conjecture that at least for $g=K_2$ it is possible to do so, probably by checking
progressively longer odd-length paths for augmentation.
However, since automatically derived formulas may be enormously large,
manual derivation of $\PROBLEM$ formulas will likely remain preferable.

\section{Appendix}

\subsection{Unimodality of partial means of monotonous sequences}
\label{appendix:unimodal}
\begin{lemma}
Let $b_1\leqslant b_2 \leqslant \ldots \leqslant b_n$ be a non-decreasing sequence of numbers.
Denote by $\left\{S_m\right\}$ its partial sums:
$$
S_m = \sum^m_{i=1}b_i,\ \ \forall\, m \in [1\,..\,n]\!.
$$
Then for any integer $k \in [1\,..\,n]$ the sequence $\left\{a_i\right\}$ of partial means
$$
a_i = \frac{S_{n - i}}{k-i},\ \ \forall\, i \in [0\,..\,k\!-\!1]
$$
is unimodal with unique global minimum.
\end{lemma}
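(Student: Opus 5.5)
The plan is to compare consecutive terms of $\{a_i\}$ directly, using one identity. Note first that $a_i = S_{n-i}/(k-i)$ is the sum of the $n-i$ smallest values divided by $k-i$. Since $i\leqslant k-1\leqslant n-1$, every $S_{n-i}$ is a nonempty sum and every denominator $k-i$ is positive. Writing $S_{n-i}=S_{n-i-1}+b_{n-i}$ gives, for $0\leqslant i\leqslant k-2$,
\begin{align*}
a_i=\frac{(k-i-1)\,a_{i+1}+b_{n-i}}{k-i}.
\end{align*}
So $a_i$ is a convex combination of $a_{i+1}$ and $b_{n-i}$ with positive weights, and it lies between them. Two consequences follow from this. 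First, $a_{i+1}>a_i$ holds iff $a_{i+1}>b_{n-i}$. Second, $a_{i+1}<a_i$ holds iff $a_{i+1}<b_{n-i}$, and the same statements hold with non-strict inequalities. This is the key step: it turns a comparison between two means into a comparison of a mean with a single element of the monotone sequence.

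Next I would show that once the sequence strictly increases, it keeps strictly increasing. Suppose $a_{i+1}>a_i$ for some $i\leqslant k-3$. By the criterion above, $a_{i+1}>b_{n-i}\geqslant b_{n-i-1}$, where the last inequality uses that $b$ is non-decreasing. Now apply the identity at index $i+1$. There $a_{i+1}$ is a convex combination of $a_{i+2}$ and $b_{n-i-1}$, and the weight $k-i-2\geqslant 1$ on $a_{i+2}$ is positive. Since $a_{i+1}$ is strictly larger than $b_{n-i-1}$, it must satisfy $a_{i+2}>a_{i+1}$ (in fact $a_{i+2}>a_{i+1}>b_{n-i-1}$). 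Induction then shows that after the first strict ascent the sequence is strictly increasing up to index $k-1$.

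Putting these together, let $i^*$ be the first index with $a_{i^*+1}>a_{i^*}$, or $i^*=k-1$ if no such index exists. Then $a_0\geqslant a_1\geqslant\dots\geqslant a_{i^*}<a_{i^*+1}<\dots<a_{k-1}$, which is the claimed unimodality.

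The main obstacle is what ``unique global minimum'' should mean. Ties $a_{i+1}=a_i$ can occur exactly when $a_{i+1}=b_{n-i}$, and these produce a flat stretch at the bottom. For example, if all $b_i$ are equal and $n=k$, every $a_i$ equals $b_1$. So I would state the result precisely as: the sequence is non-increasing and then strictly increasing, hence the minimum value is unique and the set of minimizers is a single contiguous block of indices. I would also argue that ties cannot occur except on that minimal plateau. If $a_{i+1}=a_i$, then $a_{i+1}=b_{n-i}\geqslant b_{n-i-1}$, and the convex-combination argument at index $i+1$ gives $a_{i+2}\geqslant a_{i+1}$. So a flat step can only be followed by a non-decreasing step, never by a strict descent. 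This is exactly what is needed for golden-section-type search, or for the short-circuit evaluation of formula \ref{formula:knkkrec}.
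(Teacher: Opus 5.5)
Your proof is correct and follows essentially the paper's route: the paper's manipulation of $(k-i)S_{n-i+1}\leqslant(k-i+1)S_{n-i}$ is exactly your criterion $a_{i-1}\leqslant a_i \iff b_{n-i+1}\leqslant a_i$, and both arguments then use $b_{n-i}\leqslant b_{n-i+1}$ to carry the ascent forward to $a_i\leqslant a_{i+1}$. Your convex-combination form also separates strict from non-strict steps, and it rightly reads ``unique global minimum'' as a unique minimum value attained on a contiguous block of indices; that is all the paper's own conclusion (strictly decreasing, then non-decreasing) supports, as your example with constant $b$ and $n=k$ shows.
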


\begin{proof}
Suppose that for some $i$ the inequality $a_{i-1} \leqslant a_i$ holds:
$$
(k - i) \cdot S_{n - i + 1} \leqslant (k - i + 1) \cdot S_{n - i}.
$$
Since $b_{n-i+1} \geqslant b_{n-i}$ we can subtract the respective expressions from both sides:
\begin{align*}
&(k - i)\cdot S_{n - i + 1}  - (k-i) \cdot b_{n-i+1} \leqslant \\
&\hspace{5em}(k - i + 1)\cdot  S_{n - i} - (k-i) \cdot b_{n-i}.
\end{align*}
Simplifying the inequality yields:
\begin{align*}
(k-i) \cdot S_{n-i} &\leqslant (k-i) \cdot S_{n-i-1} + S_{n-i} \;\Rightarrow \\
a_i = \frac{S_{n-i}}{k-i} &\leqslant \frac{S_{n-i-1}}{k-i-1} = a_{i+1}
\end{align*}
Thus, we have shown that if $a_{i-1} \leqslant a_i$ holds for some $i$,
then the remaining subsequence is non-decreasing.
Due to the arbitrary choice of $i$ the whole sequence $\left\{a_i\right\}$ is either
strictly decreasing, or it is decreasing up to some index and non-decreasing thereafter.
Therefore, $\left\{a_i\right\}$ is unimodal with unique global minimum.
\end{proof}

\subsection{Constructing of optimal matching for \texorpdfstring{$\PROBLEM(K_n, K_k, b)$}{VMCAP(Kn, Kk, b)}}
\label{appendix:knkksol}
From the proof of Lemma \ref{lemma:knkkrec} we obtain a straightforward greedy
algorithm to construct a solution that achieves $\PROBLEM$ number of matches.
It is an iterative way of matching $k$ vertices with the largest remaining capacities
(thus, progressively reducing imbalance between vertex capacities).
Performance can be improved by introducing batching.
Note that each iteration of the greedy algorithm produces the same match up to the point
when vertices change their position in the sorted list.
This happens when the capacity of the initially $b^{(k)}$ vertex
becomes one unit less than that of the $b^{(k+1)}$ vertex.
This observation allows the creation of matches in batches
(see Algorithm \ref{alg:knkksol}),
which empirically reduces the number of iterations required.

\begin{algorithm}[H]
\caption{Matching construction for $\PROBLEM(K_n, K_k, b)$}
\label{alg:knkksol}
\begin{algorithmic}[1]
\STATE \textbf{Input:} number of pNUMA nodes $n$;\newline
       \phantom{\textbf{Input:}} number of vNUMA nodes $k$;\newline
       \phantom{\textbf{Input:}} list of pNUMA capacities $B = [b_1,\ldots,b_n]$
\STATE \textbf{Output:} list of matches $M=[\{i_1,\ldots,i_k\},\,...]$
\STATE Let $M \gets [\,]$
\WHILE{\TRUE}
    \STATE Let $I \gets [i \;|\; b_i > 0,\; \forall i \in 1..n]$
    \IF{$|I| < k$}
       \RETURN $M$
    \ENDIF
    \STATE Sort $I$ by $B$ values in descending order
    \IF{$|I| > k$}
        \STATE Let $s \gets B[I[k]] - B[I[k\!+\!1]] + 1$
    \ELSE
        \STATE Let $s \gets B[I[k]]$
    \ENDIF
    \FOR{$j=1$ \TO $s$}
        \STATE Append $\{I[1],..,I[k]\}$ to $M$
    \ENDFOR
    \FOR{$j=1$ \TO $k$}
        \STATE $B[I[j]] \gets B[I[j]] - s$
    \ENDFOR
\ENDWHILE
\end{algorithmic}
\end{algorithm}

\bibliographystyle{ieeetr}
\bibliography{bibliography}

\end{document}